\spnewtheorem{assumption}{Assumption}{\bfseries}{\normalfont}
\pgfplotsset{compat=1.18}
\title{Efficient Shield Synthesis \\ via State-Space Transformation}
\author{Asger Horn Brorholt \and
Andreas Holck H{\o}eg-Petersen \and
\\
Kim Guldstrand Larsen \and
Christian~Schilling}
\authorrunning{Brorholt, H{\o}eg-Petersen, Larsen, and Schilling}
\institute{Aalborg University, 9220 Aalborg, Denmark
\email{\{asgerhb,ahhp,kgl,christianms\}@cs.aau.dk}}
\newcommand{\figref}[1]{Fig.~\ref{fig:#1}\xspace}
\newcommand{\uppaalstratego}{\textsc{Uppaal Stratego}\xspace}
\newcommand{\uppaal}{\textsc{Uppaal}\xspace}
\newenvironment{psmallmatrix}{\left(\begin{smallmatrix}}{\end{smallmatrix}\right)}
\newcommand{\twovec}[2]{\begin{psmallmatrix} #1 \\ #2 \end{psmallmatrix}}
\newcommand{\RR}{\mathbb{R}}
\newcommand{\powerset}[1]{\ensuremath{2^{#1}}\xspace}
\DeclareMathOperator{\atan}{atan2}
\newcommand{\I}{\ensuremath{\mathcal{I}}\xspace}
\newcommand{\grid}{\ensuremath{\mathcal{G}}\xspace}
\newcommand{\cell}{\ensuremath{C}\xspace}
\newcommand{\act}{\ensuremath{\mathit{Act}}\xspace}
\newcommand{\suc}{\ensuremath{\delta}\xspace}
\newcommand{\strategy}{\ensuremath{\sigma}\xspace}
\newcommand{\safe}{\ensuremath{\varphi}\xspace}
\newcommand{\controllablecells}[1][\safe]{\ensuremath{\mathcal{C}_{#1}}\xspace}
\newcommand{\controllablecellstransform}[1][\safe]{\ensuremath{\controllablecells[#1]^{\f}}\xspace}
\newcommand{\traj}{\ensuremath{\xi}\xspace}
\newcommand{\f}{\ensuremath{f}\xspace}
\newcommand{\finv}{\ensuremath{\f^{-1}}\xspace}
\newcommand{\getcellsinner}[1]{\ensuremath{\lfloor #1 \rfloor}_{\grid}\xspace}
\newcommand{\getcellsouter}[1]{\ensuremath{\lceil #1 \rceil}_{\grid}\xspace}
\algnewcommand{\To}{\textbf{To }}
\algnewcommand{\Input}{\item[\textbf{Input:}]}%
\algnewcommand{\Output}{\item[\textbf{Output:}]}%
\newtcolorbox{statespacebox}{
    title={State space},
    colback=clouds,
    colframe=wetasphalt,
    sharp corners,
    left=4pt,
    right=4pt,
    top=4pt,
    bottom=4pt,
    leftrule=1pt,
    rightrule=1pt,
    toprule=1pt,
    bottomrule=1pt
}
\definecolor{turquoise}{HTML}{1ABC9C}
\definecolor{emerald}{HTML}{2ECC71}
\definecolor{peterriver}{HTML}{3498DB}
\definecolor{amethyst}{HTML}{9B59B6}
\definecolor{wetasphalt}{HTML}{34495E}
\definecolor{greensea}{HTML}{16A085}
\definecolor{nephritis}{HTML}{27AE60}
\definecolor{belizehole}{HTML}{2980B9}
\definecolor{wisteria}{HTML}{8E44AD}
\definecolor{midnightblue}{HTML}{2C3E50}
\definecolor{sunflower}{HTML}{F1C40F}
\definecolor{carrot}{HTML}{E67E22}
\definecolor{alizarin}{HTML}{E74C3C}
\definecolor{clouds}{HTML}{ECF0F1}
\definecolor{concrete}{HTML}{95A5A6}
\definecolor{orange}{HTML}{F39C12}
\definecolor{pumpkin}{HTML}{D35400}
\definecolor{pomegranate}{HTML}{C0392B}
\definecolor{silver}{HTML}{BDC3C7}
\definecolor{asbestos}{HTML}{7F8C8D}
\colorlet{successor}{blue!50!red}
\begin{document}

\maketitle

\begin{abstract}
We consider the problem of synthesizing safety strategies for control systems, also known as shields. Since the state space is infinite, shields are typically computed over a finite-state abstraction, with the most common abstraction being a rectangular grid. However, for many systems, such a grid does not align well with the safety property or the system dynamics. That is why a coarse grid is rarely sufficient, but a fine grid is typically computationally infeasible to obtain. In this paper, we show that appropriate state-space transformations can still allow to use a coarse grid at almost no computational overhead. We demonstrate in three case studies that our transformation-based synthesis outperforms a standard synthesis by several orders of magnitude. In the first two case studies, we use domain knowledge to select a suitable transformation. In the third case study, we instead report on results in engineering a transformation without domain knowledge.

\keywords{Safety \and Control system \and Synthesis \and Shielding \and State-space transformation \and Finite-state abstraction.}
\end{abstract}

\section{Introduction}

Cyber-physical systems are ubiquitous in the modern world.
A key component in these systems is the digital controller.
Many of these systems are safety critical, which motivates the use of methods for the automatic construction of controllers.
Unfortunately, this problem is intricate for any but the simplest systems~\cite{LewisVS12,DoyleFT13}.

Two main methods have emerged.
The first method is \emph{reinforcement learning} (RL)~\cite{BusoniuBTKP18}, which provides convergence to an optimal solution.
However, the solution lacks formal guarantees about safety.
The second method is \emph{reactive synthesis}, which constructs a nondeterministic control strategy that is guaranteed to be safe.
However, the solution lacks optimality for secondary objectives.

\begin{figure}[t]
    \centering
    \input{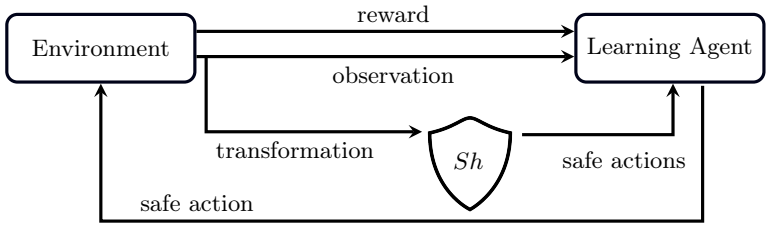}
    \caption{Reinforcement learning under a shield with a transformation~$\f$.}
    \label{fig:shielding}
\end{figure}

Due to their complementary strengths and drawbacks, these two methods have been successfully combined in the framework of \emph{shielding}~\cite{DavidJLLLST14,BloemKKW15} (cf.\ \figref{shielding}).
Through reactive synthesis, one first computes a nondeterministic control strategy called a \emph{shield}, which is then integrated in the learning process to prevent unsafe actions.
This way, safety is guaranteed and, at the same time, RL can still provide optimality with respect to the secondary objectives.

In this work, we focus on the first step: synthesis of a shield.
For infinite state spaces, we employ an abstraction technique based on state-space partitioning, where we consider the common case of a \emph{hyperrectangular grid}~\cite{Girard12,Tabuada09}.
This grid induces a finite-state two-player game, from which we can then construct the most permissive shield with standard algorithms.
The downside of a grid-based approach is that a grid often does not align well with the dynamics of the system, which causes the game to have many transitions and thus results in an overly conservative control strategy.
To counter this effect, one can refine the partition, but this has severe computational cost and quickly becomes infeasible.

\medskip

The key insight we follow in this work is that a \emph{state-space transformation} can yield a new state space where the grid aligns much better with the system dynamics.
As we show in three case studies, the transformation allows to reduce the grid significantly, often by several orders of magnitude.
In extreme cases, no grid may exist for synthesizing a control strategy in the original state space, while a simple grid suffices in the transformed state space.

We show that our transformation-based shield synthesis is sound, i.e., the guarantees of the shield transfer to the original system.
Moreover, our experiments demonstrate that the transformation does not reduce the performance of the final controller (in fact, the performance slightly increases).

Our implementation is based on our previous work on sampling-based shield synthesis~\cite{BrorholtJLLS23}.
The present work integrates nicely with such a sampling-based method, but also generalizes to set-based methods.

For the first two case studies, we employ domain knowledge to derive a suitable transformation.
For the third case study, we instead apply a new heuristic method to synthesize a suitable transformation.

\subsection{Related Work}

Abstraction-based controller synthesis is a popular approach that automatically constructs a controller from a system and a specification~\cite{Girard12,Tabuada09}.
The continuous dynamics are discretized and abstracted by a symbolic transition system, for which then a controller is found.
The most common abstraction is a regular hyperrectangular (or even hypercubic) grid.
The success of this approach depends on the choice of the grid cells' size.
If too small, the state space becomes intractably large, while if too large, the abstraction becomes imprecise and a controller may not be found.
While the cell size can be optimized~\cite{WeberRR17}, a fixed-sized grid is often bound to fail.
Instead, several works employ multiple layers of different-sized grids in the hope that coarser cells can be used most of the time but higher precision can still be used when necessary~\cite{GirardGM16,HsuMMS18a}.
In this paper, we follow an orthogonal approach.
We argue that a hyperrectangular grid is often inappropriate to capture the system dynamics and specification.
Nevertheless, we demonstrate that, often, a (coarse) grid is still sufficient when applied in a different, more suitable state space.
Orthogonal to our approach are recent efforts to employ ellipsoidal grid-like abstractions, which can also be computed lazily~\cite{EgidioLJ22,CalbertELJ24,CalbertBLJ24}.

In this work, we do not synthesize a full controller but only a nondeterministic safety strategy, which is known as a shield.
We then employ this shield as a guardrail in reinforcement learning (cf.\ \figref{shielding}) to limit the choices available to the agent so that the specification is guaranteed.
This is a known concept, which is for instance applied in the tool \uppaalstratego~\cite{stratego} and was popularized by Bloem et al.~\cite{BloemKKW15,AlshiekhBEKNT18,JansenKJSB20}.
A similar concept is safe model predictive control~\cite{BastaniL21,WabersichZ21}.

Our motivation for applying a state-space transformation is to better align with a grid, and ultimately to make the synthesis more scalable.
In that sense, our work shares the goal with some other influential concepts.
In abstract interpretation, the transformation is the abstraction function and its inverse is the concretization function, which together form a Galois connection~\cite{CousotC77}.
In our approach, the grid introduces an abstraction, but our additional transformation preserves information unless it is not injective.
Another related concept is model order reduction, where a system is transformed to another system of lower dimensionality to simplify the analysis~\cite{SchildersHR08}.
This reduction is typically approximate, which loses any formal guarantees.
However, approaches based on (probabilistic) bisimulation~\cite{LarsenS91} still allow to preserve a subspace and transfer the results to the original system.
These approaches, also called lumpability, use linear transformations and have been successfully applied to Markov chains~\cite{Buchholz94}, differential equations~\cite{BacciBLTTV21}, and quantum circuits~\cite{JimenezPastorLTT24}.
In contrast, while we do not put any restrictions on our transformations, we advocate for injective transformations in our context; this is because we also need to compute the preimage under the transformation, which otherwise incur additional approximation error.

\paragraph{Outline.}
The remainder of the paper is structured as follows.
In Section~\ref{sec:preliminaries}, we recall central concepts underlying partition-based shielding.
In Section~\ref{sec:shielding}, we discuss how state-space transformations can be used for shield synthesis over grid-based partitions.
In Section~\ref{sec:experiments}, we show experimental results in three case studies.
Finally, we conclude the paper in Section~\ref{sec:conclusion}.

\section{Preliminaries}\label{sec:preliminaries}


\paragraph{Intervals.}

Given bounds~$\ell, u \in \RR$ with $\ell \le u$, we write~$\I = \rinterval{\ell}{u} \subseteq \RR$ for the corresponding (half-open) \emph{interval} with \emph{diameter}~$u - \ell$.

\paragraph{Set extension.}

Given a function~$f \colon S \to T$, the \emph{set extension} is~$f \colon \powerset{S} \to \powerset{T}$ with~$f(X) = \bigcup_{s \in X} \{f(s)\}$ for any subset~$X \subseteq S$.
The extension generalizes to functions with further arguments, e.g., $g(X, y) = \bigcup_{s \in X} \{g(s, y)\}$.
Set extension is monotonic, i.e., $f(X) \subseteq f(X')$ whenever~$X \subseteq X'$.

\paragraph{Control systems.}

In this work, we consider discrete-time control systems.
Formally, a control system~$(S, \act, \suc)$ is characterized by a bounded $d$-dimensional state space~$S \subseteq \RR^d$, a finite set of (control) actions~$\act$, and a \emph{successor function}~$\suc \colon S \times \act \to \powerset{S}$, which maps a \emph{state}~$s \in S$ and a \emph{control action}~$a \in \act$ to a set of successor states (i.e., $\suc$ may be nondeterministic).
Often, $\suc$ is the solution of an underlying continuous-time system, measured after a fixed control period, as exemplified next.

\begin{example}\label{ex:oscillator1}
    Consider a bivariate harmonic oscillator over the state space~$S = \rinterval{-2}{2} \times \rinterval{-2}{2} \subseteq \RR^2$, whose vector field is shown in \figref{oscillator:a}.
    The continuous-time dynamics are given by the following system of differential equations:
    $\dot{s}(t) = A s(t)$, where $A = \begin{pmatrix} 0 & 1 \\ -1 & 0 \end{pmatrix}$.
    The solution is~$s(t) = e^{At} s_0$ for some initial state~$s_0$.
    For this system, we only have a single (dummy) control action, $\act = \{a\}$.
    Fixing the control period~$t = 1.2$ yields the discrete-time system~$(S, \act, \suc)$ where~$\suc(s, a) \approx \begin{pmatrix} 0.36 & 0.93 \\ -0.93 & 0.36 \end{pmatrix} s$.
\end{example}

\paragraph{Partitioning.}

A \emph{partition}~$\grid \subseteq \powerset{S}$ of~$S$ is a set of pairwise disjoint sets of states (i.e., $\forall \cell_1 \ne \cell_2 \in \grid.\ \cell_1 \cap \cell_2 = \emptyset$) whose union is~$S$ (i.e., $S = \bigcup_{\cell \in \grid} \cell$).
We call the elements~$\cell$ of~$\grid$ \emph{cells}.
For a state~$s \in S$, $[s]_\grid$ is the unique cell~$\cell$ such that~$s \in \cell$.
We furthermore define two helper functions~$\getcellsinner{\cdot} \colon \powerset{S} \to \powerset{\grid}$ and~$\getcellsouter{\cdot} \colon \powerset{S} \to \powerset{\grid}$ to under- and overapproximate a set of states~$X \subseteq S$ with cells: $\getcellsinner{X} = \{\cell \in \grid \mid \cell \subseteq X\}$ maps~$X$ to all its cells that are contained in~$X$, and~$\getcellsouter{X} = \{\cell \in \grid \mid \cell \cap X \ne \emptyset\}$ maps~$X$ to all cells that intersect with~$X$.

A cell~$\cell$ is \emph{axis-aligned} if there exist intervals~$\I_1, \dots, \I_d$ such that~$\cell = \I_1 \times \dots \times \I_d$.
A partition~$\grid$ is axis-aligned if all cells are axis-aligned.
Moreover, $\grid$ is a \emph{regular grid} if for any two cells~$\cell_1 \ne \cell_2$ in~$\grid$ and any dimension~$i = 1, \dots, d$, the diameters in dimension~$i$ are identical.
%
In what follows, we consider axis-aligned regular grid partitions, or \emph{grids} for short.
Grids enjoy properties such as easy representation by just storing the bounds and diameters.

\paragraph{Strategies and safety.}

Given a control system~$(S, \act, \suc)$, a \emph{strategy}~$\strategy \colon S \to \powerset{\act}$ maps a state~$s$ to a set of (allowed) actions~$a$.
(In the special case where~$\strategy \colon S \to \act$ uniquely determines the next action~$a$, we call~$\strategy$ a \emph{controller}.)
A sequence~$\traj = s_0 a_0 s_1 a_1 \dots$ is a \emph{trajectory} of~$\strategy$ if~$a_i \in \strategy(s_i)$ and~$s_{i+1} \in \suc(s_i, a_i)$ for all~$i$.
A safety property~$\safe \subseteq S$ is characterized by the set of safe states.
We call~$\strategy^X$ a \emph{safety strategy}, or \emph{shield},  with respect to a set~$X$
if all trajectories starting from any initial state~$s_0 \in X$ are safe, i.e., only visit safe states.
We often omit the set~$X$.

\medskip

In general, a safety strategy for infinite state spaces~$S$ cannot be effectively computed.
The typical mitigation is to instead compute a safety strategy for a finite-state abstraction.
One common such abstraction is a grid of finitely many cells.
The grid induces a two-player game.
Given a cell~$\cell$, Player~1 challenges with an action~$a \in \act$.
Player~2 responds with a cell~$\cell'$ such that~$\cell \xrightarrow{a} \cell'$.
Player~1 wins if the game continues indefinitely, and Player~2 wins if $\cell' \not \subseteq \safe$.
Solving this game yields a safety strategy over cells, which then induces a safety strategy over the (concrete) states in~$S$ that uses the same behavior for all states in the same cell.
We formalize this idea next.

\paragraph{Labeled transition system.}

Given a control system~$(S, \act, \suc)$, a grid~$\grid \subseteq \powerset{S}$ induces a finite labeled transition system~$(\grid, \act, \to)$ that connects cells via control actions if they can be reached in one step:
\begin{equation}\label{eq:transition}
    \cell \xrightarrow{a} \cell' \iff \exists s \in \cell.\ \suc(s, a) \cap \cell' \ne \emptyset.
\end{equation}


\paragraph{Grid extension.}

Given a grid~$\grid$ and a safety property~$\safe$, the set of \emph{controllable cells}, or simply \emph{safe cells}, is the maximal set of cells~$\controllablecells$ such that
\begin{equation}\label{eq:controllable_cells}
    \controllablecells = \getcellsinner{\safe} \cap \{ \cell \in \grid \mid \exists a \in \act.\ \forall \cell'.\ \cell \xrightarrow{a} \cell' \implies \cell' \in \controllablecells \}.
\end{equation}

It is straightforward to compute~$\controllablecells$ with a finite fixpoint iteration.
If~$\controllablecells$ is nonempty, there exists a safety strategy~$\strategy^{\grid} \colon \grid \to \powerset{\act}$ at the level of cells (instead of concrete states) with respect to the set~$\controllablecells$, where the most permissive such strategy~\cite{BernetJW02} is
\[
    \strategy^{\grid}(\cell) = \{ a \in \act \mid \forall \cell'.\ \cell \xrightarrow{a} \cell' \implies \cell' \in \controllablecells \}.
\]

A safety strategy~$\strategy^{\grid}$ over the grid~$\grid$ induces the safety strategy~$\strategy^X(s) = \strategy^{\grid}([s]_\grid)$ over the original state space~$S$, with respect to the set~$X = \bigcup_{\cell \in \controllablecells} \cell$.
The converse does not hold, i.e., a safety strategy may exist over~$S$ but not over~$\grid$, because the grid introduces an abstraction, as demonstrated next.

\begin{lemma}[\cite{BrorholtJLLS23}]\label{lemma:soundness}
	Let~$(S, \act, \suc)$ be a control system, $\safe \subseteq S$ be a safety property, and $\grid \subseteq \powerset{S}$ be a partition.
	If~$\strategy^{\grid}$ is a safety strategy over cells, then $\strategy^X(s) = \strategy^{\grid}([s]_\grid)$ is a safety strategy over states~$s \in X \subseteq S$, where~$X = \bigcup_{\cell \in \controllablecells} \cell$.
\end{lemma}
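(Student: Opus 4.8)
The statement is a soundness claim: a cell-level safety strategy $\strategy^{\grid}$ lifts to a state-level safety strategy on the union $X$ of the controllable cells. The natural approach is to take an arbitrary trajectory $\traj = s_0 a_0 s_1 a_1 \dots$ of the induced strategy $\strategy^X$ starting from $s_0 \in X$, and show by induction on $i$ that (i) every visited state $s_i$ lies in some controllable cell, i.e., $[s_i]_\grid \in \controllablecells$, and (ii) consequently every $s_i$ is safe, i.e., $s_i \in \safe$. Claim (ii) follows from (i) immediately, since by definition~\eqref{eq:controllable_cells} we have $\controllablecells \subseteq \getcellsinner{\safe}$, so $[s_i]_\grid \in \controllablecells$ implies $[s_i]_\grid \subseteq \safe$ and hence $s_i \in [s_i]_\grid \subseteq \safe$. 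So the whole argument reduces to maintaining the invariant in (i).

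For the base case, $s_0 \in X = \bigcup_{\cell \in \controllablecells}\cell$ means $s_0$ lies in some controllable cell, and since cells are disjoint that cell is $[s_0]_\grid$, so $[s_0]_\grid \in \controllablecells$. For the inductive step, assume $[s_i]_\grid =: \cell \in \controllablecells$. By definition of the lifted strategy, $a_i \in \strategy^X(s_i) = \strategy^{\grid}(\cell)$, which by the formula for the most permissive strategy means $\forall \cell'.\ \cell \xrightarrow{a_i} \cell' \implies \cell' \in \controllablecells$. Now I need to connect the concrete successor $s_{i+1} \in \suc(s_i, a_i)$ to a transition in the labeled transition system: since $s_i \in \cell$ and $s_{i+1} \in \suc(s_i, a_i) \cap [s_{i+1}]_\grid$, the witness $s = s_i$ shows $\suc(s, a_i) \cap [s_{i+1}]_\grid \ne \emptyset$, hence $\cell \xrightarrow{a_i} [s_{i+1}]_\grid$ by~\eqref{eq:transition}. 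Applying the strategy property with $\cell' = [s_{i+1}]_\grid$ gives $[s_{i+1}]_\grid \in \controllablecells$, closing the induction.

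The argument is essentially bookkeeping, and the only subtle point — the "main obstacle," such as it is — is making sure the abstraction is used in the correct direction: the transition relation~\eqref{eq:transition} is \emph{existentially} quantified over states in the source cell (an overapproximation of reachable cells), and the most permissive strategy is designed to be safe against exactly this overapproximation (it quantifies universally over all $\cell'$ with $\cell \xrightarrow{a} \cell'$). So every concrete one-step successor of $s_i$ is covered by some abstract transition, which is precisely what keeps the invariant alive. One should also note that $X \subseteq \safe$ is not needed separately; it follows from $\controllablecells \subseteq \getcellsinner{\safe}$. Since the trajectory $\traj$ and index $i$ were arbitrary, every trajectory of $\strategy^X$ from any $s_0 \in X$ visits only safe states, which is the definition of $\strategy^X$ being a safety strategy with respect to $X$.
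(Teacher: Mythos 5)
Your proof is correct. Note that the paper itself does not prove this lemma: it is imported from~\cite{BrorholtJLLS23} and later recovered as the special case of Theorem~\ref{thm:soundness} where~$\f$ is the identity, so there is no in-paper argument to compare against line by line. Your induction on the trajectory index --- maintaining the invariant $[s_i]_\grid \in \controllablecells$, using $s_i$ as the existential witness for $[s_i]_\grid \xrightarrow{a_i} [s_{i+1}]_\grid$ in Eq.~\eqref{eq:transition}, and concluding safety from $\controllablecells \subseteq \getcellsinner{\safe}$ --- is exactly the standard soundness argument one expects, and it is consistent with how the paper's Theorem~\ref{thm:soundness} reduces its first claim to this lemma. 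One small caveat: the lemma hypothesis is that $\strategy^{\grid}$ is \emph{some} safety strategy over cells, whereas your inductive step invokes the explicit formula for the \emph{most permissive} one. This is harmless --- any safety strategy over cells with respect to $\controllablecells$ is pointwise contained in the most permissive one on controllable cells, so $a_i \in \strategy^{\grid}(\cell)$ still implies $\forall \cell'.\ \cell \xrightarrow{a_i} \cell' \implies \cell' \in \controllablecells$ --- but a one-line remark making that containment explicit (or, alternatively, lifting the concrete trajectory to the cell-level trajectory $[s_0]_\grid\, a_0\, [s_1]_\grid \dots$ and appealing directly to the cell-level safety of $\strategy^{\grid}$) would make the proof match the stated hypothesis exactly.
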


Recall that a grid is an abstraction.
The precision of this abstraction is controlled by the size of the grid cells, which we also refer to as the granularity.

\begin{figure}[t]
    \centering
    \begin{subfigure}[t]{0.326\linewidth}
        \begin{tikzpicture}[scale=0.74]
	\fill[fill=wetasphalt] (1,1) rectangle (3,3);
	\draw[draw=none,fill=concrete] (2,2) circle (0.4);
	\draw[step=1] (0,0) grid (4,4);
	%
	\begin{axis}[%
			width    = 55.8mm,
			height   = 55.6mm,
			view     = {0}{90},  
			xlabel   = $x$,
			ylabel   = $y$,
			domain   = -2.2:2.2,
			y domain = -2.2:2.2,
			xtick    = {-2,...,2},
			ytick    = {-2,...,2},
			xmin=-2,xmax=2,ymin=-2,ymax=2
		]
		\addplot3[orange,quiver={u=y/sqrt(x^2+y^2),v=-x/sqrt(x^2+y^2),scale arrows=0.2},samples=8,-stealth] (x,y,0);
	\end{axis}
	\node[circle,fill=nephritis,draw,inner sep=0.5mm] at (3.12,2.5) (start) {};
	\node[circle,fill=peterriver,draw,inner sep=0.5mm] at (2.9,1.15) (end) {};
	\draw[successor,thick,->,out=290,in=55] (start) to (end);
\end{tikzpicture}
        \caption{Original state space.}
        \label{fig:oscillator:a}
    \end{subfigure}
    \hfill
    \begin{subfigure}[t]{0.326\linewidth}
        \begin{tikzpicture}[scale=0.74]
	\fill[fill=wetasphalt] (0,0) rectangle (4,4);
	\draw[draw=none,fill=concrete] (2,2) circle (0.4);
	\draw[step=1] (0,0) grid (4,4);
	%
	\begin{axis}[%
			width    = 55.8mm,
			height   = 55.6mm,
			view     = {0}{90},  
			xlabel   = $x$,
			ylabel   = $y$,
			domain   = -2.2:2.2,
			y domain = -2.2:2.2,
			xtick    = {-2,...,2},
			ytick    = {-2,...,2},
			xmin=-2,xmax=2,ymin=-2,ymax=2
		]
		\addplot3[orange,quiver={u=y/sqrt(x^2+y^2),v=-x/sqrt(x^2+y^2),scale arrows=0.2},samples=8,-stealth] (x,y,0);
	\end{axis}
\end{tikzpicture}
        \caption{After fixpoint iteration.}
        \label{fig:oscillator:b}
    \end{subfigure}
    \hfill
    \begin{subfigure}[t]{0.326\linewidth}
        \begin{tikzpicture}[scale=0.74]
	\fill[fill=wetasphalt] (0,0) rectangle (4,1);
	\draw[draw=none,fill=concrete] (0,0) rectangle (4,0.8);
	\draw[step=1] (0,0) grid (4,4);
	\begin{axis}[%
			width    = 55.8mm,
			height   = 55.6mm,
			view     = {0}{90},  
			xlabel   = $\theta$,
			ylabel   = $r$,
			domain   = -pi:pi,
			y domain = 0:2,
			xtick    = {-pi,0,pi},
			xticklabels = {$-\pi$,0,$\pi$},
			ytick    = {0,...,2},
			xmin=-pi, xmax=pi, ymin=0, ymax=2
		]
		\addplot3[orange,quiver={u=-4,v=0,scale arrows=0.1},samples=8,-stealth] (x,y,0);
	\end{axis}
	\node[circle,fill=emerald,draw,inner sep=0.5mm] at (2.22,2.4) (start) {};
	\node[circle,fill=peterriver,draw,inner sep=0.5mm] at (1.5,2.4) (end) {};
	\draw[successor,thick,->] (start) -- (end);
\end{tikzpicture}
        \caption{Transformed state space.}
        \label{fig:oscillator:c}
    \end{subfigure}
    \caption{\subref{fig:oscillator:a}~Harmonic oscillator in $x/y$ state space with an obstacle~$O$ (gray). Initially, the black cells~$\getcellsouter{O}$ are unsafe. The example state (green) leads to an unsafe cell, rendering its cell unsafe too.
    \subref{fig:oscillator:b}~In the fixpoint, all cells are unsafe.
    \subref{fig:oscillator:c}~Transformation to polar coordinates. The initial marking is also the fixpoint.}
    \label{fig:oscillator}
\end{figure}
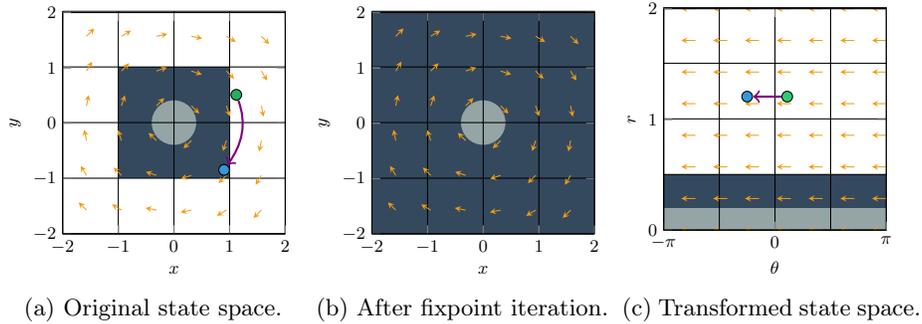

\begin{example}\label{ex:oscillator2}
    Consider again the harmonic oscillator from Example~\ref{ex:oscillator1}.
    To add a safety constraint, we place a disc-shaped obstacle, i.e., a set of states~$O \subseteq S$, in the center.
    Since this system only has a dummy action~$a$, there is a unique strategy~$\strategy$ that always selects this action.
    Clearly, $\strategy$ is safe for all states that do not intersect with the obstacle~$O$ because all trajectories circle the origin.

    With a rectangular grid in the $x/y$ state space (with cell diameter~$1$ in each dimension), we face two fundamental problems.
    The first problem is that, in order to obtain a tight approximation of a disc with a rectangular grid, one requires a fine-grained partition.
    Thus, precision comes with a significant computational overhead.
    Recall that the cells that are initially marked unsafe are given by~$\getcellsouter{O}$, which are drawn black in \figref{oscillator:a}.

    The second problem is similar in nature, but refers to the system dynamics instead.
    Since the trajectories of most systems do not travel parallel to the coordinate axes, a rectangular grid cannot capture the successor relation (and, hence, the required decision boundaries for the strategy) well.
    Consider the state highlighted in green in \figref{oscillator:a}.
    Its trajectory leads to an unsafe cell, witnessing that its own cell is also unsafe.
    By iteratively applying this argument, the fixpoint is $\controllablecells = \emptyset$, i.e., no cell is considered safe (\figref{oscillator:b}).
    This means that, for the chosen grid granularity, no safety strategy~$\strategy^{\grid}$ at the level of cells exists.

    We remark that, since we are only interested in safety at discrete points in time, finer partitions could still yield safety strategies for this example.
\end{example}


%

\section{Shielding in Transformed State Spaces}\label{sec:shielding}


In this section, we show how a transformation of the state space can be used for grid-based shield synthesis, and demonstrate that it can be instrumental.

\subsection{State-Space Transformations}

We recall the principle of state-space transformations.
Consider a state space~$S \subseteq \RR^d$.
A transformation to another state space~$T \subseteq \RR^{d'}$ is any function~$\f \colon S \to T$.

For our application, some transformations are better than others.
We call these transformations \emph{grid-friendly}, where, intuitively, cells in the transformed state space~$T$ are better able to separate the controllable from the uncontrollable states, i.e., capture the decision boundaries well.
This is for instance the case if there is an invariant property and~$\f$ maps this property to a single dimension.

\begin{example}\label{ex:oscillator3}
    Consider again the harmonic oscillator from Example~\ref{ex:oscillator2}.
    We transform the system to a new state space, with the goal of circumventing the two problems identified above.
    Recall that we want to be able to represent a disc shape as well as circular trajectories in a grid-friendly way.
    Observe that the radius of the circle described by a trajectory is an invariant of the trajectory.
    This motivates to choose a transformation from Cartesian coordinates to polar coordinates.
    In polar coordinates, instead of~$x$ and~$y$, we have the dimensions~$\theta$ (angle) and~$r$ (radius).
    The transformation is $\f(x, y) = (\theta, r)^\top = (\atan(y, x), \sqrt{x^2 + y^2})^\top$, and the transformed state space is~$T = \rinterval{-\pi}{\pi} \times \rinterval{0}{\sqrt{8}}$.
    The result after transforming the system, including the obstacle and the two example states, is shown in \figref{oscillator:c}.
    As can be seen, the grid boundaries are parallel to both the obstacle boundaries as well as the dynamics, which is the best-case scenario.
    Observe that the radius dimension ($r$) is invariant.
    Hence, no white cell reaches a black cell and no further cells need to be marked unsafe.
\end{example}

\subsection{Shield Synthesis in a Transformed State Space}

In the following, we assume to be given a control system~$(S, \act, \suc_S)$, a safety property~$\safe$, another state space~$T$, a transformation~$\f \colon S \to T$, and a grid~$\grid \subseteq \powerset{T}$.
Our goal is to compute the controllable cells similar to Eq.~\eqref{eq:controllable_cells}.
However, since the grid is defined over~$T$, we need to adapt the definition.
The set of controllable cells is the maximal set of cells~$\controllablecellstransform$ such that
\begin{equation}\label{eq:controllable_cells_transform}
    \controllablecellstransform = \getcellsinner{f(\safe)} \cap \{ \cell \in \grid \mid \exists a \in \act.\ \forall \cell'.\ \cell \xrightarrow{a} \cell' \implies \cell' \in \controllablecellstransform \}.
\end{equation}

The first change is to map~$\safe$ to cells over~$T$.
Next, it is convenient to define a new control system~$(T, \act, \suc_T)$ that imitates the original system in the new state space.
The new successor function~$\suc_T \colon T \times \act \to \powerset{T}$ is given indirectly as
\begin{equation}\label{eq:successor_transformed_implicit}
	\suc_T(f(s), a) = \f(\suc_S(s, a)).
\end{equation}

The second change in Eq.~\eqref{eq:controllable_cells_transform} is implicit in the transition relation~$\cell \xrightarrow{a} \cell'$ of the labeled transition system~$(\grid, \act, \to)$.
Recall from Eq.~\eqref{eq:transition} that the transitions are defined in terms of the successor function~$\suc_T$:
\begin{equation*}
    \cell \xrightarrow{a} \cell' \iff \exists t \in \cell.\ \suc_T(t, a) \cap \cell' \ne \emptyset.
\end{equation*}

\paragraph{State-based successor computation.}

To simplify the presentation, for the moment, we only consider a single state~$t \in T$.
To effectively compute its successors, we cannot directly use Eq.~\eqref{eq:successor_transformed_implicit} because it starts from a state~$s \in S$ instead.
Hence, we first need to map~$t$ back to~$S$ using the \emph{inverse transformation}~$\finv \colon T \to \powerset{S}$, defined as~$\finv(t) = \{ s \in S \mid \f(s) = t \}$.
The resulting set is called the \emph{preimage}.

Now we are ready to compute~$\suc_T(t, a)$ for any state~$t \in T$ and action~$a \in \act$.
First, we map~$t$ back to its preimage~$X = \finv(t)$.
Second, we apply the original successor function~$\suc_S$ to obtain~$X' = \suc_S(X, a)$.
Finally, we obtain the corresponding transformed states~$Y = \f(X')$.
In summary, we have
\begin{equation}\label{eq:suc_transformed}
    \suc_T(t, a) = \f(\suc_S(\finv(t), a)).
\end{equation}

Note that, if the transformation~$\f$ is bijective, its inverse~$\finv$ is deterministic and we have~$\finv(\f(s)) = s$ and~$\f(\finv(t)) = t$ for all~$s \in S$ and~$t \in T$.

\begin{example}
    Consider again the harmonic oscillator from Example~\ref{ex:oscillator2}.
    The inverse transformation is~$\finv(\theta, r) = \twovec{r \cos(\theta)}{r \sin(\theta)}$.
    The blue successor state of the green state in \figref{oscillator:c} is computed by mapping to the green state in \figref{oscillator:a} via~$\finv$, computing the blue successor state via~$\suc$, and mapping back via~$\f$.
\end{example}

\paragraph{Grid-based successor computation.}

\begin{figure}[t]
    \centering
    \begin{subfigure}[b]{0.31\linewidth}
        \centering
        \begin{tikzpicture}[t/.style={->,thick,>=stealth}]
	\node (C) {$\cell$};
	\node[right=of C] (Y) {$Y$};
	\node[below=of C] (X) {$X$};
	\node[at=(Y|-X)] (X') {$X'$};
	%
	\draw[t] (C) to node[above] {$\suc_T$} (Y);
	\draw[t] (C) to node[left] {$\finv$} (X);
	\draw[t] (X) to node[above] {$\suc_S$} (X');
	\draw[t] (X') to node[right] {$\f$} (Y);
\end{tikzpicture}
        \caption{Commutative diagram.}
        \label{fig:commutative_diagram}
    \end{subfigure}
    \hfill
    \begin{subfigure}[b]{0.68\linewidth}
        \centering
        \hspace*{3mm} $S$ \hspace*{37mm} $T$ \\
        \includesvg[width=\linewidth]{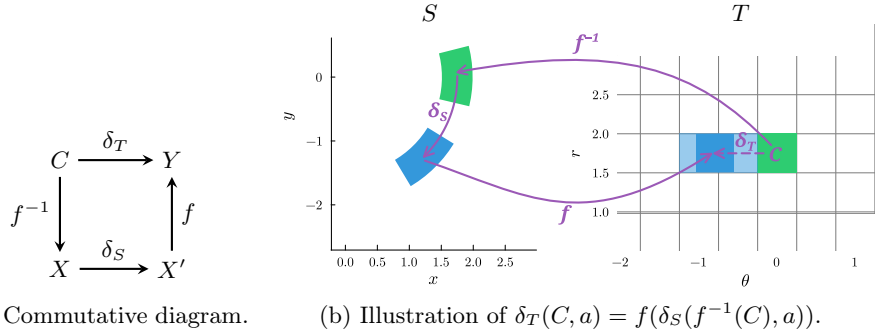}
        \caption{Illustration of $\suc_T(\cell, a) = \f(\suc_S(\finv(\cell), a))$.}
        \label{fig:transformation_successor_illustration}
    \end{subfigure}
    \caption{The successor function~$\suc_T$ for the cell~$\cell$ (green) in the transformed state space~$T$ is computed in three steps.
    First, we map to the original state space~$S$ via~$\finv$.
    Second, we compute the successors via~$\suc_S$.
    Third, we map back to the transformed state space~$T$ via~$\f$ (dark blue).
    Finally, we can identify all cells intersecting with this set via~$\getcellsouter{\cdot}$ (light blue).}
    \label{fig:transformation_successor}
\end{figure}

Eq.~\eqref{eq:suc_transformed} is directly applicable to cells via set extension, and no further modification is required.
%
%
We provide illustrations of the construction in \figref{transformation_successor}.

The construction allows us to compute sound shields, both in the transformed and in the original state space.

\begin{theorem}\label{thm:soundness}
	Let~$(S, \act, \suc)$ be a control system, $\safe \subseteq S$ be a safety property, $\f \colon S \to T$ be a transformation with inverse~$\finv \colon T \to \powerset{S}$, and~$\grid \subseteq \powerset{T}$ be a partition of~$T$.
	Define the control system~$(T, \act, \suc_T)$ with $\suc_T$ according to $f$.
	Let~$\controllablecellstransform$ be the set of controllable cells of $\grid$ and~$\strategy^{\grid}$ be a corresponding  safety strategy over cells.
	Then the following are safety strategies over states:
	\begin{itemize}
		\item[$\bullet$] $\strategy^{Y}(t) = \strategy^{\grid}([t]_\grid)$ over states in~$t \in Y \subseteq T$, where~$Y = \bigcup_{\cell \in \controllablecellstransform} \cell$.
		
		\item[$\bullet$] $\strategy^X(s) = \strategy^{\grid}([f(s)]_\grid)$ over states in~$s \in X \subseteq S$, where~$X = \finv(\bigcup_{\cell \in \controllablecellstransform} \cell)$.
	\end{itemize}
\end{theorem}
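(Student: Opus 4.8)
The plan is to reduce the statement to Lemma~\ref{lemma:soundness} applied to the transformed control system~$(T, \act, \suc_T)$, and then to transfer the first bullet back to~$S$ via the transformation. Concretely, I would first observe that Eq.~\eqref{eq:controllable_cells_transform} is literally Eq.~\eqref{eq:controllable_cells} instantiated for the control system~$(T, \act, \suc_T)$, the safety property~$f(\safe)$, and the grid~$\grid \subseteq \powerset{T}$. Hence Lemma~\ref{lemma:soundness} immediately yields that $\strategy^{Y}(t) = \strategy^{\grid}([t]_\grid)$ is a safety strategy over states~$t \in Y = \bigcup_{\cell \in \controllablecellstransform}\cell$ with respect to the control system~$(T, \act, \suc_T)$ and the safety property~$f(\safe)$. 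This disposes of the first bullet, modulo checking that "safe" in~$T$ (staying inside~$f(\safe)$) is the right notion — which it is, since $\controllablecellstransform \subseteq \getcellsinner{f(\safe)}$ by definition, so every cell ever visited lies inside~$f(\safe)$.

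For the second bullet I would show that the strategy~$\strategy^X$ on~$S$ inherits safety from~$\strategy^Y$ on~$T$ by pushing trajectories forward through~$\f$. Take any trajectory $\traj = s_0 a_0 s_1 a_1 \dots$ of~$\strategy^X$ with $s_0 \in X = \finv(Y)$. I claim $f(\traj) := f(s_0)\, a_0\, f(s_1)\, a_1 \dots$ is a trajectory of~$\strategy^Y$ starting in~$Y$. Indeed, $f(s_0) \in Y$ since $s_0 \in \finv(Y)$ and $f(\finv(Y)) \subseteq Y$; the action condition $a_i \in \strategy^X(s_i) = \strategy^{\grid}([f(s_i)]_\grid) = \strategy^Y(f(s_i))$ holds by definition of~$\strategy^X$; and the transition condition follows because $s_{i+1} \in \suc_S(s_i, a_i)$ implies, after applying~$\f$ and using $s_i \in \finv(f(s_i))$, that $f(s_{i+1}) \in \f(\suc_S(\finv(f(s_i)), a_i)) = \suc_T(f(s_i), a_i)$ by Eq.~\eqref{eq:suc_transformed}. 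Since $f(\traj)$ is a trajectory of~$\strategy^Y$ from~$Y$, the first bullet gives $f(s_i) \in f(\safe)$ for all~$i$. The last step is to conclude $s_i \in \safe$ from $f(s_i) \in f(\safe)$, i.e., to close the set-membership gap introduced by the set extension of~$\f$.

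The main obstacle is exactly that last step: $f(s_i) \in f(\safe)$ does \emph{not} in general imply $s_i \in \safe$ when~$\f$ is not injective, since $f(\safe)$ may also contain images of unsafe states. I would handle this by strengthening the argument: rather than tracking $f(s_i) \in f(\safe)$, track the cell $[f(s_i)]_\grid \in \controllablecellstransform$, hence $[f(s_i)]_\grid \subseteq f(\safe)$, and then note that what we actually need is a property of the \emph{preimage}. The clean fix is to observe that $X = \finv(Y) \subseteq \finv(f(\safe))$, and to argue that every reachable concrete state stays in $\finv(f(\safe))$ — but this set can be strictly larger than~$\safe$, so soundness of the concrete shield requires either (i) assuming $\finv(f(\safe)) \subseteq \safe$, which holds automatically when~$\f$ is injective, or (ii) reading the theorem's safety claim for~$\strategy^X$ as safety with respect to~$\finv(f(\safe))$ rather than~$\safe$. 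I expect the paper resolves this by the injectivity discussion in the related-work section (they "advocate for injective transformations" precisely so that $\finv \circ \f$ is the identity); under injectivity, $f(s_i) \in f(\safe) \iff s_i \in \safe$ and the proof closes cleanly. I would therefore present the forward-simulation argument in full generality and then invoke injectivity (or note it as the standing assumption) only at the final membership step, flagging that without it the guarantee degrades to the over-approximation $\finv(f(\safe))$.
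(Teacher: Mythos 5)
Your proof is correct and reaches the same conclusions, but the route for the second bullet differs from the paper's. The paper handles the first bullet exactly as you do (a direct appeal to Lemma~\ref{lemma:soundness} for the system~$(T,\act,\suc_T)$), but for the second bullet it does \emph{not} go through a trajectory-level forward simulation into~$T$; instead it proves one-step invariance of~$X$ directly by the inclusion chain
$\suc_S(s,a) \subseteq \suc_S(\finv(\f(s)),a) \subseteq \finv(\f(\suc_S(\finv(\f(s)),a))) \subseteq \finv(\suc_T(\f(s),a)) \subseteq \finv\bigl(\bigcup_{\cell\in\controllablecellstransform}\cell\bigr) = X$,
using only the facts $s'\in\finv(\f(s'))$, monotonicity of set extension, and Eq.~\eqref{eq:suc_transformed}. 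Your simulation argument is an equivalent induction packaged differently; it buys a cleaner reduction of the second bullet to the first, while the paper's chain is shorter and avoids reasoning about whole trajectories. Two remarks. First, the paper also spends a paragraph arguing that~$\suc_T$ is well-defined when~$\f$ is not injective or not surjective (Eq.~\eqref{eq:successor_transformed_implicit} alone does not define a function in the non-injective case; Eq.~\eqref{eq:suc_transformed} does); you use Eq.~\eqref{eq:suc_transformed} implicitly but never address this, and it is worth a sentence. Second, the ``gap'' you flag at the end is a genuine observation about the theorem, but the paper resolves it the way your option (ii) does: its notion of safety strategy is relative to the set~$X$ itself (``with respect to a set~$X$''), and the proof establishes exactly that trajectories never leave $X=\finv\bigl(\bigcup_{\cell\in\controllablecellstransform}\cell\bigr)$; the stronger conclusion $X\subseteq\safe$ indeed requires $\finv(\f(\safe))\subseteq\safe$ (e.g., injectivity of~$\f$), which the paper only advocates for informally rather than assumes. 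So present your argument tracking $[\f(s_i)]_\grid\in\controllablecellstransform$ (hence $s_i\in X$) rather than $\f(s_i)\in\f(\safe)$, and your proof matches the paper's guarantee exactly.
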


\begin{proof}
    We first need to argue that~$\suc_T$ is well-defined.
    If~$\f$ is not injective, then~$\finv$ is nondeterministic, i.e., generally yields a set of states, but the set extension of~$\suc_S$ treats this case.
    If~$\f$ is not surjective, its inverse is undefined for some states~$t \in T$.
    Note that the set extension ignores these states: for any set~$Y \subseteq T$ we have~$\finv(Y) = \{s \in S \mid \f(s) \in Y\}$.
    In particular, if no state in~$\cell$ has a preimage, $\suc_T(\cell, a) = \emptyset$.
    Thus, $\suc_T$ is well-defined.

    The first claim follows directly from Lemma~\ref{lemma:soundness}.
    For the second claim, fix any state~$s \in X$ and~$a \in \strategy^X(s)$.
    We need to show that all states in~$\suc_S(s, a)$ are safe, i.e., in~$X$ as well.
    By construction, $f(s) \in \bigcup_{\cell \in \controllablecellstransform} \cell$ and $a \in \strategy^{\grid}([f(s)]_\grid)$.
    Hence,
    \begin{equation}\label{eq:suc_contained}
        \suc_T(f(s), a) \subseteq \bigcup_{\cell \in \controllablecellstransform} \cell.
    \end{equation}

    We also use the following simple lemma:
    \begin{equation}\label{eq:monotonicity}
        \forall s' \in S.\ s \in \finv(f(s')).
    \end{equation}

    Finally, we get (applying monotonicity in the last inclusion):
    \begin{align*}
        \suc_S(s, a)
        &\stackrel{\eqref{eq:monotonicity}}{\subseteq} \suc_S(\finv(f(s)), a)
        \stackrel{\eqref{eq:monotonicity}}{\subseteq} \finv(\f(\suc_S(\finv(f(s)), a))) \\
        &\stackrel{\eqref{eq:suc_transformed}}{\subseteq} \finv(\suc_T(f(s), a))
        \stackrel{\eqref{eq:suc_contained}}{\subseteq} \finv\left(\bigcup\nolimits_{\cell \in \controllablecellstransform} \cell\right)
        = X. \tag*{\qed}
    \end{align*}
\end{proof}

Note that we obtain Lemma~\ref{lemma:soundness} as the special case where~$\f$ is the identity.

\subsection{Shielding and Learning}

We assume the reader is familiar with the principles of reinforcement learning.
Here we shortly recall from~\cite{BrorholtJLLS23} how to employ~$\strategy^{\grid}$ for safe reinforcement learning.
The input is a Markov decision process (MDP) and a reward function, and the output is a controller maximizing the expected cumulative return.
The MDP is a model with probabilistic successor function~$\suc_P \colon S \times \act \times S \to [0, 1]$.
An MDP induces a control system~$(S, \act, \suc_S)$ with nondeterministic successor function~$\suc_S(s, a) = \{s' \in S \mid \suc_P(s, a, s') > 0\}$ as an abstraction where the distribution has been replaced by its support.

Now consider \figref{shielding}, which integrates a transformed shield into the learning process.
In each iteration, the shield removes all unsafe actions (according to~$\strategy^{\grid}$) from the agent's choice.
By construction, when starting in a controllable state, at least one action is available, and all available actions are guaranteed to lead to a controllable state again.
Thus, by induction, all possible trajectories are infinite and never visit an unsafe state.
Furthermore, filtering unsafe actions typically improves learning convergence because fewer options need to be explored.

\paragraph{Learning in~$S$ and~$T$.}

Recall from Theorem~\ref{thm:soundness} that we can apply the shield both in the transformed state space and in the original state space by using the transformation function~$\f$.
This allows us to also perform the learning in either state space.
We consider the following setup the default: learning in the original state space~$S$ under a shield computed in the transformed state space~$T$.

An alternative is to directly learn in~$T$.
A potential motivation could be that learning, in particular agent representation, may also be easier in~$T$.
For instance, the learning method implemented in \uppaalstratego represents an agent by axis-aligned hyperrectangles~\cite{JaegerJLLST19}.
Thus, a grid-friendly transformation may also be beneficial for learning, independent of the shield synthesis.
We will investigate the effect in our experiments.

\section{Experiments}\label{sec:experiments}

In this section, we demonstrate the benefits of state-space transformations for three models.\footnote{A repeatability package is available here: \\ \url{https://github.com/AsgerHB/state-space-transformation-shielding}.}
For the first two models, we use domain knowledge to select a suitable transformation.
For the third model, we instead derive a transformation experimentally.
The implementation builds on our synthesis method~\cite{BrorholtJLLS23}.

\subsection{Satellite Model}

\begin{figure}[t]
    \begin{minipage}{0.40\linewidth}
        \centering
        \includesvg[width=0.75\textwidth]{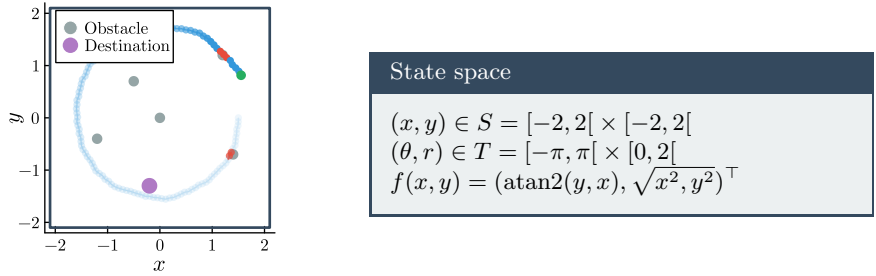}
    \end{minipage}
    \hfill
    \begin{minipage}{0.55\linewidth}
        \begin{statespacebox}
            $(x, y) \in S = \rinterval{-2}{2}\times\rinterval{-2}{2}$

            $(\theta, r) \in T = \rinterval{-\pi}{\pi}\times\rinterval{0}{2}$

            $\f(x, y) = (\atan(y, x), \sqrt{x^2, y^2})^\top$
        \end{statespacebox}
    \end{minipage}
    \caption{Satellite model.}
    \label{fig:satellite_unsafe_trace}
\end{figure}

For the first case study, we extend the harmonic oscillator with two more control actions to also move inward and outward: $\act = \{\textit{ahead}, \textit{out}, \textit{in}\}$.
The box to the side shows the relevant information about the transformation.
Compared to Example~\ref{ex:oscillator3}, beside the actions, we modify two parts.
First, the transformed state space~$T$ is reduced in the radius dimension to~$r \in \rinterval{0}{2}$ because values outside the disc with radius~$2$ are not considered safe (see below).
Second, the successor function still uses matrix~$A$ from Example~\ref{ex:oscillator1} but with a control period of~$t = 0.05$.
The successor function thus becomes $\suc(s, a) = e^{At}\twovec{rc \cos(\theta)}{rc \sin(\theta)}$, where for~$s = (x, y)^\top$ and~$\f$ as in Example~\ref{ex:oscillator3} we have
\begin{equation*}
    \left ( \begin{matrix}
        \theta \\ r 
    \end{matrix} \right )
    = f(s),
    \quad
    c = \begin{cases}
        0.99 & \text{if } a = \textit{in} \\
        1.01 & \text{if } a = \textit{out} \\
        1 & \text{otherwise.}
    \end{cases},
    \quad
    e^{At} \approx \left ( \begin{matrix}
        1.00 & 0.05\\
        -0.05 & 1.00 \\
    \end{matrix} \right )
\end{equation*}

Instead of one large obstacle, we add several smaller stationary (disc-shaped) obstacles.
The shield has two goals: first, the agent must avoid a collision with the obstacles; second, the agent's distance to the center must not exceed~$2$.
\figref{satellite_unsafe_trace} shows the size and position of the obstacles (gray).
Overlaid is a trajectory (blue) produced by a random agent that selects actions uniformly.
Some states of the trajectory collide with obstacles (red).

Additionally, we add an optimization component to the system.
A disc-shaped \emph{destination} area (purple) spawns at a random initial position (inside the 2-unit circle).
Colliding with this area grants a reward and causes it to reappear at a new position.
The optimization criterion for the agent is thus to visit as many destinations as possible during an episode.

\medskip

\begin{figure}[t]
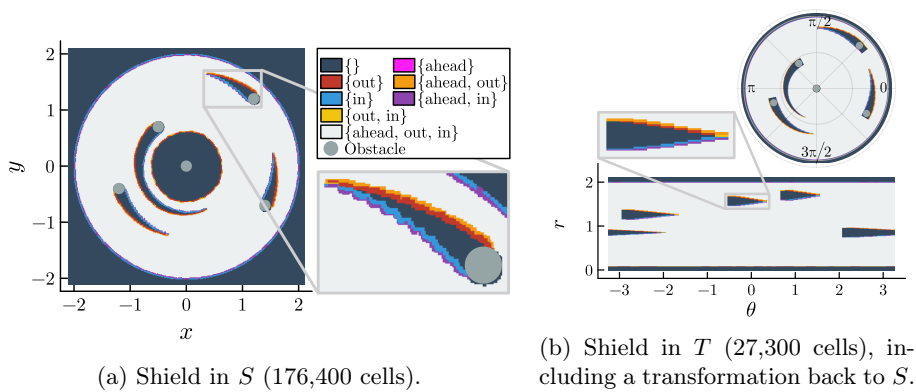

    \centering
    \begin{subfigure}[b]{0.57\linewidth}
        \includesvg[width=\linewidth]{Graphics/Spiral/Spiral_Standard_State_Space.svg}
        \caption{Shield in $S$ (176,400 cells).}
        \label{fig:satellite_shield_original}
    \end{subfigure}
    \hfill
    \begin{subfigure}[b]{0.41\linewidth}
        \includesvg[width=\linewidth]{Graphics/Spiral/Spiral_Altered_State_Space.svg}
        \caption{Shield in $T$ (27,300 cells), including a transformation back to $S$.}
        \label{fig:satellite_shield_transformed}
    \end{subfigure}
    \caption{Shields for the satellite model. The legend applies to both figures.}
    \label{fig:satellite_shields}
\end{figure}

\figref{satellite_shield_original} shows a shield obtained in the original state space.
First, we note that a fine grid granularity is required to accurately capture the decision boundaries.
In particular, the ``tails'' behind the obstacles split into regions where moving \textit{ahead} is no longer possible.
There is a small region at the tip of the tail (yellow) where the agent may either move \textit{in} or \textit{out}, but not \textit{ahead} anymore.

Moreover, despite this high precision, the obstacle in the center causes a large set of cells around it to be marked unsafe, although we know that the \textit{ahead} (and also \textit{out}) action is safe.
This is a consequence of the abstraction in the grid.

\medskip

Now we transform the system, for which we choose polar coordinates again.
\figref{satellite_shield_transformed} shows a shield obtained in this transformed state space.
As we saw for the harmonic oscillator, the boundary condition is well captured by a grid.
The obstacles also produce ``tails'' in this transformation, which require relatively high precision in the grid to be accurately captured.
Still, since the shapes are axis-aligned, and the size of the transformed state space is different, the number of cells can be reduced by one order of magnitude. 
The grid over the original state space had $176{,}400$~cells, compared to $27{,}300$~cells in the transformed state space.
Computing the original shield took $2$~minutes and $41$~seconds, while computing the transformed shield only took $10$~seconds.
Finally, the region marked unsafe at the bottom of \figref{satellite_shield_transformed}, which corresponds to the central obstacle in the original state space, is tight, unlike in \figref{satellite_shield_original}.
In summary, the transformed shield is both easier to compute and more precise.

\subsection{Bouncing-Ball Model}

\begin{figure}[t]
    \begin{minipage}{0.40\linewidth}
        \centering
        \includesvg[width=0.75\textwidth]{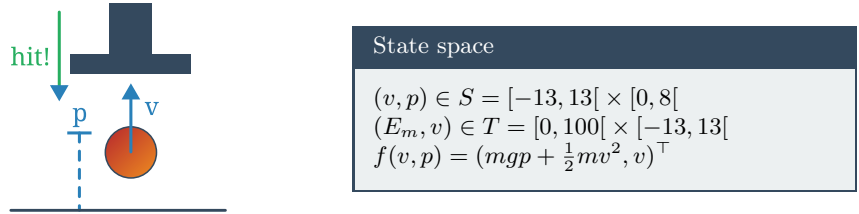}
    \end{minipage}
    \hfill
    \begin{minipage}{0.55\linewidth}
        \begin{statespacebox}
            $(v, p) \in S = \rinterval{-13}{13}\times\rinterval{0}{8}$

            $(E_m, v) \in T = \rinterval{0}{100}\times\rinterval{-13}{13}$

            $\f(v, p) = (m g p + \frac{1}{2} m v^2, v)^\top$
        \end{statespacebox}
    \end{minipage}
    \caption{Bouncing-ball model.}
    \label{fig:bouncing_ball}
\end{figure}

\begin{figure}[t]
    \centering
    \begin{subfigure}[b]{0.49\linewidth}
        \includegraphics[width=\linewidth]{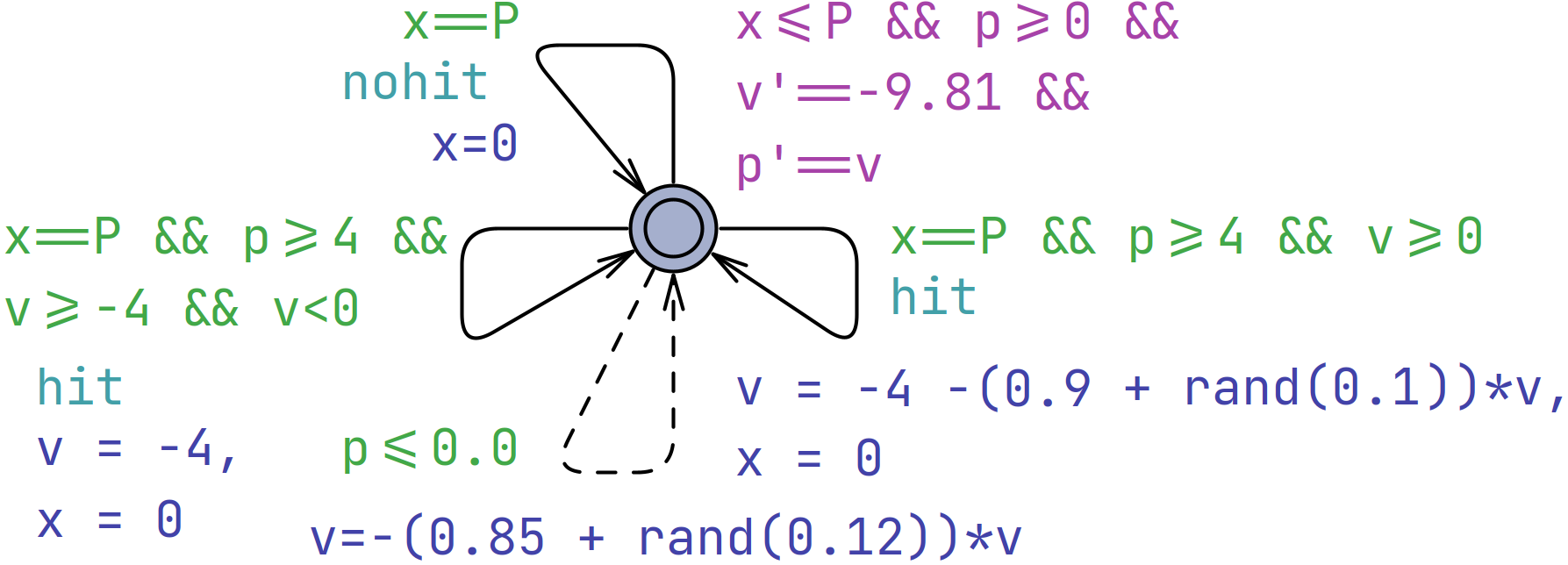}
        \vspace*{2mm}
        \caption{Hybrid automaton.}
        \label{fig:bb_automaton}
    \end{subfigure}
    \hfill
    \begin{subfigure}[b]{0.49\linewidth}
        \includesvg[width=\linewidth]{Graphics/BouncingBall/BB_Standard_State_Space.svg}
        \caption{Shield in~$S$ (520,000 cells).}
        \label{fig:bb_shield_original}
    \end{subfigure}
    \\
    \begin{subfigure}[b]{0.49\linewidth}
        \includesvg[width=\linewidth]{Graphics/BouncingBall/BB_Altered_State_Space.svg}
        \caption{Shield in~$T$ (650 cells).}
        \label{fig:bb_shield_transformed}
    \end{subfigure}
    \hfill
    \begin{subfigure}[b]{0.49\linewidth}
        \includesvg[width=\linewidth]{Graphics/BouncingBall/BB_Altered_State_Space_in_Standard_State_Space.svg}
        \caption{Shield in~$T$ transformed back to~$S$.}
        \label{fig:bb_shield_transformed_projection}
    \end{subfigure}
    \caption{Hybrid automaton and shields for the bouncing-ball model.}
    \label{fig:bb_shield}
\end{figure}

For the second case study, we consider the model of a bouncing ball from~\cite{BrorholtJLLS23}.
\figref{bouncing_ball} shows an illustration of the system, while \figref{bb_automaton} shows the hybrid-automaton model.
The state space consists of the velocity~$v$ and the position~$p$ of the ball.
When the ball hits the ground, it loses energy subject to a stochastic dampening (dashed transition).
The periodic controller is modeled with a clock~$x$ with implicit dynamics~$\dot{x} = 1$ and control period~$P = 0.1$.
The available actions are~$\act = \{\textit{nohit}, \textit{hit}\}$, where the \textit{nohit} action has no effect and the \textit{hit} action pushes the ball downward subject to its velocity, but only provided it is high enough ($p \ge 4$).

The goal of the shield is to keep the ball bouncing indefinitely, which is modeled as nonreachability of the set of states~$p \le 0.01 \land |v| \le 1$.

The optimization task is to use the \textit{hit} action as rarely as possible, which is modeled by assigning it with a cost and minimizing the total cost.

\medskip

Despite its simple nature, this model has quite intricate dynamics, including stochastic and hybrid events that require zero-crossing detection, which makes determining reachability challenging.
It was shown in~\cite{BrorholtJLLS23} that a sampling-based shield synthesis is much more scalable than an approach based on guaranteed reachability analysis ($19$~minutes compared to $41$~hours).
The grid needs to be quite fine-grained to obtain a fixpoint where not every cell is marked unsafe.
This corresponds to~$520{,}000$ cells, and the corresponding shield is shown in \figref{bb_shield_original}.

\medskip

Now we use a transformation to make the shield synthesis more efficient.
The mechanical energy~$E_m$ stored in a moving object is the sum of its potential energy and its kinetic energy, respectively.
Formally, $E_m(p, v) = m g p + \frac{1}{2} m v^2$, where~$m = 1$ is the mass and~$g = 9.81$ is gravity.
Thus, the mechanical energy of a ball in free fall (both with positive or negative velocity) remains invariant.
Hence, $E_m$ is a good candidate for a transformation.

However, only knowing~$E_m$ is not sufficient to obtain a permissive shield because states with the same value of~$E_m$ may be below or above~$p = 4$ and hence may or may not be hit.
The equation for~$E_m$ depends on both~$p$ and~$v$.
In this case, it is sufficient to know only one of them.
Here, we choose the transformed state space~$T$ with just~$E_m$ and~$v$.
The transformation function is~$\f(v, p) = (m g p + \frac{1}{2} m v^2, v)^\top$ and its inverse is~$\finv(E_m, v) = ((E_m - \frac{1}{2} m v^2) / (m g), v)^\top$.
We note that using~$E_m$ and~$p$ instead yields a shield that marks all cells unsafe.
This is because~$v$ is quadratic in~$E_m$ and, thus, we cannot determine its sign.

\figref{bb_shield_transformed} shows the shield obtained in this transformed state space.
It can be seen that this results in a very low number of just~$650$ cells in total, which is a reduction by three orders of magnitude.
This shield can be computed in just $1.3$~seconds, which compared to $19$~minutes is again a reduction by three orders of magnitude.

To provide more intuition about how precise this shield still is, we project the shield back to the original state space in \figref{bb_shield_transformed_projection}.
While a direct comparison is not fair because the grid granularity differs vastly, overall the shapes are similar.



\subsection{Cart-Pole Model}

\begin{figure}[t]
    \begin{minipage}{0.40\linewidth}
        \includesvg[width=0.75\textwidth]{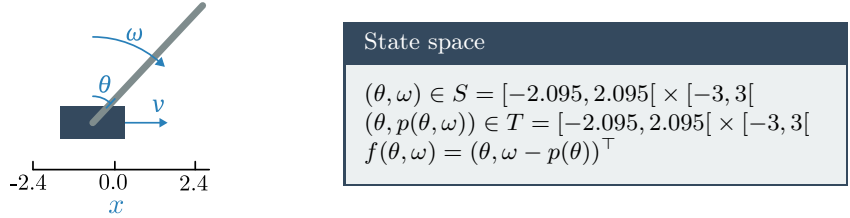}
        \centering
    \end{minipage}
    \hfill
    \begin{minipage}{0.55\linewidth}
        \begin{statespacebox}
            $(\theta, \omega) \in S = \rinterval{-2.095}{2.095}\times\rinterval{-3}{3}$

            $(\theta, p(\theta, \omega)) \in T = \rinterval{-2.095}{2.095}\times\rinterval{-3}{3}$

            $\f(\theta, \omega) = (\theta, \omega - p(\theta))^\top$
        \end{statespacebox}
    \end{minipage}
    \caption{Cart-pole model.}
    \label{fig:cart_pole}
\end{figure}

For the third case study, we consider a model of an inverted pendulum installed on a cart that can move horizontally.
This model is known as the cart-pole model.
An illustration is shown in \figref{cart_pole}.
The dynamics are given by the following differential equations~\cite{Florian05}:
\begin{align*}
    \dot{\theta} &= \omega
    &
    \dot{\omega} &= \frac{\displaystyle g \sin(\theta)+ \cos(\theta) \cdot \left(\frac{-F - m_p \ell \omega^2 \sin(\theta)}{m_c + m_p}\right)}{\displaystyle \ell \left(\frac{4}{3} - \frac{m_p \cos^2(\theta)}{{m_c+m_p}}\right)} \\
    \dot{x} &= v
    &
    \dot{v} &= \frac{F + m_p \ell \left(\omega^2 \sin(\theta) - \dot{\omega} \cos(\theta)\right)}{m_c + m_p}
\end{align*}

The state dimensions are the pole's angle~$\theta$ and angular velocity~$\omega$ as well as the cart's position~$x$ and velocity~$v$.
Moreover, $g = 9.8m/s^2$ is gravity, $\ell = 0.5$\,m is the pole's length, $m_p = 0.1$\,kg is the pole's mass, and~$m_c = 1$\,kg is the cart's mass.
Finally, $F = \pm 10$ is the force that is applied, corresponding to the action from~$\act = \{\textit{left}, \textit{right}\}$, which can be changed at a rate of~$0.02$ (control period).

The goal of the shield is to balance the pole upright, which translates to the condition that the angle stays in a small cone~$|\theta| \le 0.2095$.

The optimization goal is to keep the cart near its initial position~$x(0)$.
Moving more than $2.4$\,m away yields a penalty of~$1$ and resets the cart.



\medskip

Observe that the property for the shield only depends on the pole and not on the cart.
Hence, it is sufficient to focus on the pole dimensions~$\theta$ and~$\omega$ for shield synthesis, and leave the cart dimensions for the optimization.
A shield in the original state space is shown in \figref{cart_pole_shield_original}.

\medskip

In the following, we describe a state-space transformation for shield synthesis.
Unlike for the other models, we are not aware of an invariant property that is useful for our purposes.
Instead, we will derive a transformation in two steps.

Recall that a transformation is useful if a grid in the new state space captures the decision boundaries well, i.e., the new decision boundaries are roughly axis-aligned.
Thus, our plan is to approximate the shape of the decision boundaries in the first step and then craft a suitable transformation in the second step.

\subsubsection{Approximating the Decision Boundaries.}

\begin{figure}[t]
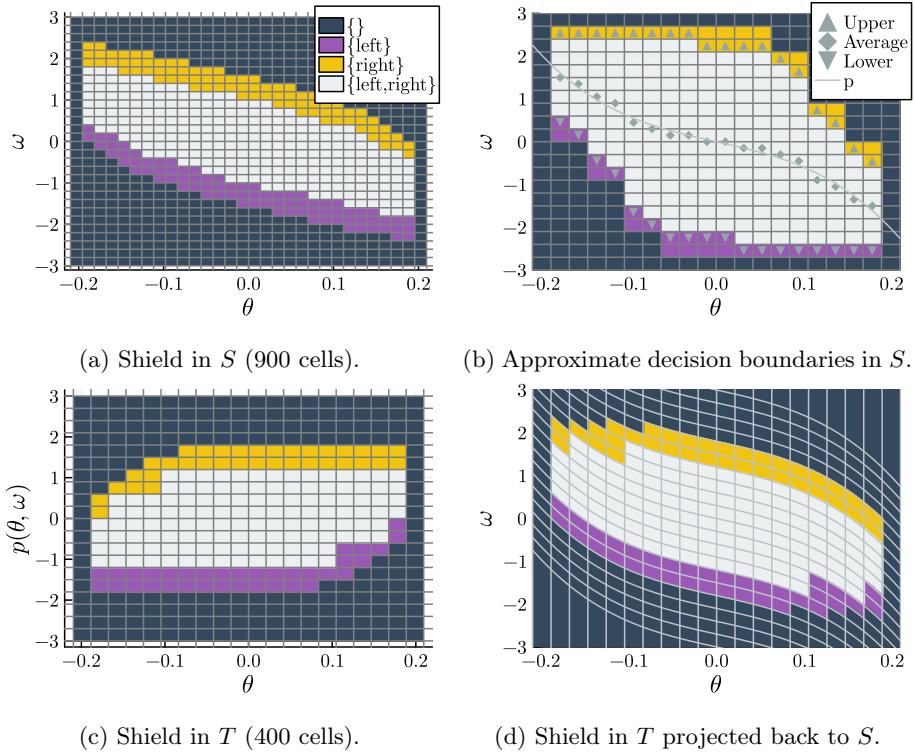

    \centering
    \begin{subfigure}[b]{0.49\linewidth}
        \includesvg[width=\linewidth]{Graphics/CartPole/CartPole_Standard_State_Space.svg}
        \caption{Shield in~$S$ ($900$~cells).}
        \label{fig:cart_pole_shield_original}
    \end{subfigure}
    \hfill
    \begin{subfigure}[b]{0.49\linewidth}
        \includesvg[width=\linewidth]{Graphics/CartPole/Standard_State_Space_Fitting_Polynomial.svg}
        \caption{Approximate decision boundaries in~$S$.}
        \label{fig:cart_pole_shield_unfinished}
    \end{subfigure}
    \\
    \begin{subfigure}[b]{0.49\linewidth}
        \includesvg[width=\linewidth]{Graphics/CartPole/CartPole_Altered_State_Space.svg}
        \caption{Shield in~$T$ ($400$~cells).}
        \label{fig:cart_pole_shield_transformed}
    \end{subfigure}
    \hfill
    \begin{subfigure}[b]{0.49\linewidth}
        \includesvg[width=\linewidth]{Graphics/CartPole/CartPole_Altered_State_Space_in_Standard_State_Space.svg}
        \caption{Shield in~$T$ projected back to~$S$.}
        \label{fig:cart_pole_shield_transformed_projected}
    \end{subfigure}
    \caption{Shield computation for the cart-pole model. The legend in \figref{cart_pole_shield_original} applies to all subfigures.}
    \label{fig:cart_pole_shield}
\end{figure}

\figref{cart_pole_shield_original} shows the decision boundaries of a fixpoint computed using $30 \times 30$ cells.
However, our work of state-space transformations was motivated because computing the shield is generally not feasible in the first place.

Therefore, here we take a different approach, which uses a grid of just~$20 \times 20$ cells.
Computing a shield for such a coarse grid in the original state space yields~$\controllablecells = \emptyset$, i.e., all cells become unsafe.
%
%
This is a consequence of the abstraction, i.e., a trajectory at the grid level may be spurious at the state level.
This abstraction grows with the number of steps of the trajectory.
Our idea is thus to only perform the fixpoint iteration at the grid level for a low number (here: three) of steps.
(Technically, this means that the strategy is only guaranteed to be safe for three steps.)
The result is the marking of cells in \figref{cart_pole_shield_unfinished}.
Indeed, the decision boundaries roughly approximate those in \figref{cart_pole_shield_original}.


\subsubsection{Crafting a Transformation.}

We want to find a grid-friendly transformation that ``flattens out'' the decision boundaries.
Our idea is to keep the dimension~$\theta$ and replace~$\omega$ by a transformation that is ``flatter.''
We observe that the upper (yellow) and lower (purple) decision boundaries are symmetric.
Hence, the distance to the average of the upper and lower boundaries is a good approximation.

This idea is visualized in \figref{cart_pole_shield_unfinished}.
Here we compute the average (diamonds) of the upper and lower boundaries (triangles).
Then we fit a polynomial to approximate this shape.
In our implementation, we used the Julia \href{https://github.com/JuliaMath/Polynomials.jl}{\texttt{Polynomials}} library, which implements a standard linear least squares method~\cite{DraperS98}.
Here, a third-degree polynomial~$p(\theta) = - 141.6953 \cdot \theta^3 - 4.5508 \cdot \theta$ is sufficient.

To obtain the full transformation, we need to express the offset from~$p(\theta)$.
Thus, we choose~$\f(\theta, \omega) = (\theta, \omega - p(\theta))^\top$.
The inverse function is~$\finv(\theta, z) = (\theta, z + p(\theta))^\top$, where~$z$ is the new dimension in the transformed state space ($T$).

\medskip

The resulting shield is shown in \figref{cart_pole_shield_transformed}.
The grid size is $400$~cells, as compared to $900$~cells in the original state space.
Both took less than a second to synthesize, at $244$\,ms and $512$\,ms, respectively.

\subsection{Strategy Reduction}

\begin{table}[t]
    \caption{Representation sizes of the computed shields.}
    \label{tab:shield_reduction}
    \centering
    \begin{tabular}{l @{~~} c @{~~} r @{~~} r}
    \toprule
    \textbf{Model}                 & \textbf{State space} & \textbf{Number of cells} & \textbf{Number of nodes} \\
    \midrule
    \multirow{2}{*}{Satellite}     & $S$ & 176,400   & 4,913  \\
                                   & $T$ & 27,300    & 544  \\
    \midrule
    \multirow{2}{*}{Bouncing ball} & $S$ & 520,000   & 940  \\
                                   & $T$ & 650       & 49  \\
    \midrule
    \multirow{2}{*}{Cart-pole}     & $S$ & 900       & 99  \\
                                   & $T$ & 400       & 32  \\
    \bottomrule
    \end{tabular}%
\end{table}

We provide an overview of the savings due to computing the shield in the transformed state space in Table~\ref{tab:shield_reduction}.
The column labeled \emph{Number of cells} clearly shows a significant reduction in all cases.
We remark that, in order to have a fair comparison, we have selected the grid sizes from visual inspection to ensure that the plots look sufficiently close.
However, it is not the case that one of the shields is more permissive than the other.

The strategies above can be represented with a $d$-dimensional matrix.
Matrices are inherently limiting representations of shields, especially when the shield should be stored on an embedded device.
Empirically, a decision tree with axis-aligned predicates is a much better representation.
To demonstrate the further saving potential, we converted the shields to decision trees and additionally applied the reduction technique from~\cite{HoegPetersenLWJ23}.
The last column in Table~\ref{tab:shield_reduction} shows the number of nodes in the decision trees.
As can be seen, we always achieve another significant reduction by one to two orders of magnitude.

    \subsection{Shielded Reinforcement Learning}

\begin{table}[t]
    \caption{Cumulative return over $1000$ episodes with both shielding and learning in either of the state spaces.
    Higher return is better for the satellite model, and vice versa for the other models.
    Each row's best result is marked in bold face.}
    \label{tab:shielded_learning}
    \centering
    \begin{tabular}{c @{~~~} r @{~~} r @{~~} r c @{~~~} r @{~~} r @{~~} r c @{~~~} r @{~~} r @{~~} r}
        \toprule
        \textbf{Learning} & \multicolumn{3}{c}{\textbf{Satellite} ($\nearrow$)} && \multicolumn{3}{c}{\textbf{Bouncing ball} ($\searrow$)} && \multicolumn{3}{c}{\textbf{Cart-pole} ($\searrow$)} \\
        \midrule
        & \multicolumn{1}{c}{None} & \multicolumn{1}{c}{$S$} & \multicolumn{1}{c}{$T$} && \multicolumn{1}{c}{None} & \multicolumn{1}{c}{$S$} & \multicolumn{1}{c}{$T$} && \multicolumn{1}{c}{None} & \multicolumn{1}{c}{$S$} & \multicolumn{1}{c}{$T$} \\
        \cline{2-4} \cline{6-8} \cline{10-12}
        $S$ & 1.123 & 0.786 & \textbf{1.499} && 39.897 & 37.607 & \textbf{36.593} && 0.007 & 0.019 & \textbf{0.001} \\
        $T$ & 0.917 & 0.889 & \textbf{1.176} && 39.128 & 40.024 & \textbf{39.099} && \textbf{0.000} & \textbf{0.000} & \textbf{0.000} \\
        \bottomrule
    \end{tabular}%
\end{table}

The only motivation for applying a state-space transformation was to be able to compute a cheaper shield.
From the theory, we cannot draw any conclusions about the impact on the controller performance.
We investigate this impact in the following experiments, with the main result that the transformed shield actually increases the performance consistently.

\medskip

We conduct six experiments for each of the three models.
For shielding, we consider three variants (no shield, shielding in the original state space~$S$, and shielding in the transformed state space~$T$).
For each variant, we reinforcement-learn two controllers.
One controller is trained in the original state space~$S$, while the other controller is trained in the transformed state space~$T$.

In Table~\ref{tab:shielded_learning}, we provide the learning results for all combinations of shielding and learning.
The data is given as the cumulative return obtained over $1000$ executions of the environment and the respective learned agent using \uppaal.

The results show that, for all models, the highest reward is achieved by the  controller operating under the shield in the transformed state space.
This holds regardless of which state space the controller was trained in.
Additionally, the controller that was trained in the original state space achieves higher performance.
Thus, the transformation was not helpful for the learning process itself.


%
%

\section{Conclusion}\label{sec:conclusion}

We have demonstrated that state-space transformations hold great potential for shield synthesis.
We believe that they are strictly necessary when applying shield synthesis to many practical systems due to state-space explosion.

In the first two case studies, we used domain knowledge to select a suitable transformation.
In the third case study, we instead engineered a transformation in two steps.
We plan to generalize these steps to a principled method and investigate how well it applies in other cases.

State-space transformations can be integrated with many orthogonal prior extensions of grid-based synthesis.
One successful extension is, instead of precomputing the full labeled transition system, to compute its transitions on the fly~\cite{HsuMMS18b}.
Another extension is the multilayered abstraction~\cite{GirardGM16,HsuMMS18a}.
Going one step further, in cases where a single perfect transformation does not exist, we may still be able to find a family of transformations of different strengths.

\subsection*{Acknowledgments}

We thank Tom Henzinger for the suggestion to study level sets.
This research was partly supported by the Independent Research Fund Denmark under reference number 10.46540/3120-00041B, DIREC - Digital Research Centre Denmark under reference number 9142-0001B, and the Villum Investigator Grant S4OS under reference number 37819.

\bibliographystyle{splncs04}
\bibliography{bibliography}

\end{document}